\documentclass[aps,twocolumn,superscriptaddress,showpacs]{revtex4-1}

\usepackage{amsmath}
\usepackage{amsthm} 
\usepackage{graphicx}
\usepackage{epsf}
\usepackage{color}
\usepackage[normalem]{ulem}

\newcommand{\cA}{\mathcal{A}}
\newcommand{\cS}{\mathcal{S}}
\newcommand{\rhoS}{\rho_\mathrm{S}}
\newcommand{\rhoref}{\rho_\mathrm{B}^\mathrm{(ref)}}
\newcommand{\HSB}{H_\mathrm{SB}}
\newcommand{\HS}{H_\mathrm{S}}
\newcommand{\HB}{H_\mathrm{B}}
\newcommand{\tr}[1]{\mathrm{tr}{\left\{#1\right\}}}
\newcommand{\rhoME}{\rho^{(\mathrm{ME})}}
\newcommand{\AME}{\cA^{(\mathrm{ME})}}
\newcommand{\trB}{\mathrm{tr}_\mathrm{B}}
\newcommand{\mapAref}{\cA^\mathrm{(ref)}}
\newcommand{\rhoBref}{\rho_\mathrm{B}^{(\mathrm{ref})}}
\newcommand{\rhoSa}{\rho_{\mathrm{S},1}}
\newcommand{\rhoSb}{\rho_{\mathrm{S},2}}
\newcommand{\rhoBth}{\rho_\mathrm{B}^{(\mathrm{th})}}
\newcommand{\upe}{\mathrm{e}}
\newcommand{\upd}{\mathrm{d}}
\newcommand{\ket}[1]{|#1\rangle}
\newcommand{\bra}[1]{\langle#1|}
\newcommand{\mapA}{\cA_\beta^{(\mathrm{ME})}}
\newcommand{\ordO}{\mathcal{O}}
\newcommand{\rhoSB}{\rho_\mathrm{SB}}
\newcommand{\uket}{|\!\!\uparrow\rangle}
\newcommand{\ubra}{\langle\uparrow\!\!|}
\newcommand{\dbra}{\langle\downarrow\!\!|}
\newcommand{\dket}{|\!\!\downarrow\rangle}
\newcommand{\upi}{\mathrm{i}}
\newcommand{\rhoB}{\rho_\mathrm{B}}

\newtheorem*{proposition*}{Proposition}
\newtheorem{corollary}{Corollary}



\begin{document}

\title{The initial system-bath state via the maximum-entropy principle}
\author{Jibo \textsc{Dai}}
\affiliation{Centre for Quantum Technologies, National University of
  Singapore, Singapore 117543}
\author{Yink Loong \textsc{Len}}
\affiliation{Centre for Quantum Technologies, National University of
  Singapore, Singapore 117543}
\author{Hui Khoon \textsc{Ng}}
\affiliation{Centre for Quantum Technologies, National University of
  Singapore, Singapore 117543} 
\affiliation{Yale-NUS College, Singapore 138609}
\affiliation{MajuLab, CNRS-UNS-NUS-NTU International Joint Research Unit, UMI 3654, Singapore}
\date{Manuscript draft of \today}

\begin{abstract}
The initial state of a system-bath composite is needed as the input for prediction from any quantum evolution equation to describe subsequent system-only reduced dynamics or the noise on the system from joint evolution of the system and the bath. The conventional wisdom is to write down an uncorrelated state as if the system and the bath were prepared in the absence of each other; yet, such a factorized state cannot be the exact description in the presence of system-bath interactions. Here, we show how to go beyond the simplistic factorized-state prescription using ideas from quantum tomography: We employ the maximum-entropy principle to deduce an initial system-bath state consistent with the available information. For the generic case of weak interactions, we obtain an explicit formula for the correction to the factorized state. Such a state turns out to have little correlation between the system and the bath, which we can quantify using our formula.
This has implications, in particular, on the subject of subsequent non-completely-positive dynamics of the system. Deviation from predictions based on such an almost uncorrelated state is indicative of accidental control of hidden degrees of freedom in the bath.
\end{abstract}

\pacs{03.65.Wj, 03.67.-a, 03.65.Yz}{}
\maketitle

\section{Introduction}
Time-evolution problems require the specification of initial conditions for the prediction of future behavior.
In quantum evolution, these initial conditions come in the form of the initial state of the quantum system in question. For open quantum systems, subsequent dynamics involve not just the system of interest, but also the bath---the coupling to which gives non-unitary evolution of the system---and the joint initial system-bath state is of relevance. How does one write down a reasonable initial system-bath state, given the inability to perform tomography on the bath, the defining quality of which is that of inaccessibility and uncontrollability?

Usually, one argues that the system is well isolated from the bath: Only when the coupling is weak does this split into system and bath make good physical sense. The system is then viewed as having been prepared in some state $\rhoS$ independently of the bath. The bath, usually much larger than the system and hence suffering negligible influence from the system, is taken to be in some reference state $\rhoref$ (e.g., the thermal state) as if the system were absent. This reasoning gives the factorized state $\rhoS\otimes \rhoref$ as the initial system-bath state. One arrives at the same state by alternatively imagining that the system-bath interaction is zero for time $t<0$, and is ``turned on" only at $t=0$. 

Practically, the isolation of the system from the bath can never be complete; e.g., one can never decouple from the electromagnetic vacuum. One expects the initially perfectly uncorrelated situation to be but an approximation, albeit often a good one. As we advance towards more and more sensitive quantum precision measurements, we are less and less justified in ignoring these system-bath correlations. The presence of initial correlations also raises interesting questions that have triggered much ongoing discussion regarding the description of subsequent system-only dynamics: the lack of a well-defined reduced dynamics \cite{Pechukas:94,Alicki95,Pechukas95}; the connection between non-completely-positive (non-CP) dynamics and correlated initial joint system-bath states \cite{Pechukas:94,Alicki95,Pechukas95,Stelmachovic01,Jordan04,Shaji05,Buscemi14} (see Ref.~\cite{Buscemi14} for further references to the many works discussing the consequences of non-CP dynamics in a variety of settings); more specifically, the relationship between completely-positive (CP) dynamics and initial states with vanishing discord \cite{Rodriguez-Rosario08,Shabani+Lidar:09,Brodutch13}; attempts to define reduced dynamics in the presence of initial system-bath correlations under restricted conditions \cite{Royer96,Carteret08,Shabani09,Buscemi14,Dominy15}; etc.
The non-CP nature of the reduced system-only dynamics resulting from initial correlations also have close links with non-Markovian dynamics of the system, another subject of vigorous recent discussion and creative exploitation of the resulting effects (see Refs.~\cite{Rivas14} and \cite{Breuer15} for comprehensive reviews on the subject). An initial correlated system-bath state may also have a bearing on conclusions in the discussion of topics like quantum fluctuation theorems \cite{Hanggi15}.

In many of these articles, the initial system-bath state is unspecified, taken to be some fixed but unknown state---unknown because one is unable to perform state tomography on the bath. One should, however, be able to write down a reasonable initial system-bath state. Afterall, a state is \emph{our}---the experimenters'---best description of the physical situation. The system-bath state should thus express what we \emph{know} about the system and bath, incorporating any prior knowledge of the experimental circumstances, as well as any characterization data of the initial system state. Furthermore, it should express our \emph{lack} of knowledge of the precise identity of the bath state, stemming from our inability to control the microscopic state of the bath, the latter property forming the basis for the dichotomy into bath versus system. As such, the framework of state estimation \cite{paris2004quantum} provides a solid foundation for writing down a joint system-bath state. In particular, the approach of maximum-entropy (ME) state estimation, well justified in the writings of Jaynes \cite{Jaynes:1957,Jaynes:1982,JaynesBook} and used more recently in quantum state tomography \cite{Buzek00,Teo11,Teo12}, offers a concrete and philosophically satisfactory way of representing both our knowledge as well as lack thereof for the system-bath composite.

In this article, we examine this ME route towards a reasonable initial system-bath state. The ME approach identifies as the best guess for the state, the statistical operator that maximizes the entropy in the space constrained by our (partial) knowledge from available data, with a nonzero entropy arising from the incompleteness of the knowledge, i.e., our ignorance. 
In Sec.~II, we set the stage for discussion by summarizing the knowledge of the system-bath composite obtainable in a typical experimental situation. In Sec.~III, we explain the use of the ME principle for constructing a joint system-bath state. We also point out the relationship to the assignment map approach to reduced system-only dynamics. In Sec.~IV, we examine the case of a weak system-bath interaction at fixed temperature, the generic experimental situation and the case of most relevance in the discussion of initial system-bath correlations. We obtain a formula for the correction to the factorized state, which forms our main result. This formula is used to derive a bound on the initial system-bath correlations for an arbitrary weak interaction Hamiltonian. We discuss implications on the subsequent system-only dynamics, and illustrate with some examples. In Sec.~V, we mention some related issues, and conclude in Sec.~VI.

\section{Our knowledge}

Controlled quantum experiments begin with the preparation of the system in some target state at time $t=0$, before letting it evolve naturally or according to a sequence of gates for $t>0$. That the system is well-initialized into the desired state is verified via tomography: The system is repeatedly prepared under the same experimental conditions, and the copies are measured using a chosen tomography strategy; the data are used to estimate the prepared state. We phrase this estimate---what we know of the system---in terms of estimates of the expectation values $\{o_i\}_{i=1}^K$ of a set of operators $\{O_i\}_{i=1}^K$ on the system state. In a typical tomographic setting, $\{O_i\}$ is the set of tomographic outcomes (or POVM) and the $o_i$ is the probability that the detector for outcome $i$ registers a click.

The set $\{O_i\}$ can be complete in that knowing their expectation values identifies exactly one density operator $\rhoS$ for the system state; or it can be incomplete, and allows for a restricted set of $\rhoS$s consistent with the data. We will assume that $\{O_i\}$ is complete, although it is straightforward to incorporate the case of an incomplete set using methods from analyzing incomplete tomographic data.
We will also assume that we have good certainty of our estimates for the $o_i$s. Cases where uncertainty arises because of limited data have been studied in the literature and can be incorporated using the concept of estimator regions \cite{Christandl+1:12,Blume-Kohout12,Shang13}. We will, however, refrain from putting in such complications, and only point out that one should in general include knowledge about the \emph{quality} of the available information about the system, not just \emph{what} information is available.

What about our knowledge of the bath? By ``bath", we refer to the part of the relevant physical system not amenable to microscopic control. We can at best determine only macroscopic properties, like temperature, pressure, volume, etc. By ``relevant", we mean that we include only those degrees of freedom that couple sufficiently (though weakly) to our controlled system to affect its dynamics; the ``rest of the world" can be ignored at the level of precision detectable in the experiment. We possess no microscopic knowledge of the bath, having no ability to measure the bath in the same way as we could perform tomography on the system. We do, however, assume that we have correctly identified those physically relevant degrees of freedom, e.g., the cavity modes that are resonant or near-resonant to a few-level atom (the idealized system) inside the cavity; the surrounding environment of nuclear spins around a nitrogen-vacancy-centre spin site; etc. 

With this identification also comes the physical model for the coupling between the system and the bath, which we can phrase in terms of an interaction Hamiltonian $\HSB$. More general coupling, e.g., non-unitary evolution via a joint master-equation approach, or a positivity- and trace-preserving map, is possible, but for our current discussion, we will restrict ourselves to the simplest joint unitary evolution. Alongside this $\HSB$, we can also include a system-only Hamiltonian $\HS$, and a bath-only Hamiltonian $\HB$.

In summary, our knowledge of the system-bath composite at time $t=0$ can be collected as follows:
\begin{enumerate}
\renewcommand{\itemsep}{1pt}
\item information about the microscopic system state $\rhoS$ from knowing $o_i=\tr{\rhoS O_i}$, $i=1,2,\ldots, K$;
\item the relevant bath degrees of freedom;
\item the full Hamiltonian $H=\HS+\HB+\HSB$;
\item macroscopic properties of the system and the bath. 
\end{enumerate}
Item 1 assumes repeatability in the preparation of the system state; items 2 and 3 assume a reasonable model for the system and bath, based on the experimenter's understanding of the physical situation; the last requires the laboratory measurement of only macroscopic properties of the system-bath composite. 

The above knowledge is the basic starting point for any joint system-bath evolution investigation, and hence is the right place to begin our construction of a joint system-bath state consistent with this information.

\section{\label{sec:3}The maximum-entropy state}
Without the ability to perform tomography on the bath to determine its microscopic state, we have only incomplete information about the overall system-bath state at time $t=0$. The state that we write down must hence reflect this partial ignorance. The use of ME methods in such situations follows a long tradition dating back to the beginnings of statistical physics. For statistical inference problems, Jaynes \cite{Jaynes:1957,Jaynes:1982,JaynesBook} argued lucidly for the viewpoint of maximum entropy as the most rational approach in the face of incomplete information, and Ref.~\cite{Buzek00} introduced the ME approach to estimate the density operator from known expectation values of  observables. ME ideas are utilized more recently in quantum state estimation with informationally incomplete measurements \cite{Teo11,Teo12}.

The ME state $\rhoME$ for the system-bath composite is the state that attains 
\begin{equation}\label{ME_State_def}
\cS_\mathrm{max}\equiv\cS{\bigl(\rhoME\bigr)}=\max_{\rho\in\Sigma} S(\rho), 
\end{equation}
where $\cS(\rho)=-\tr{\rho\log\rho}$, is the von Neumann entropy. Here, $\Sigma$ is the set of all system-bath states consistent with our partial knowledge of the system and the bath.
This maximization over $\Sigma$ can be implemented as an unconstrained maximization over all states using the Lagrange-multiplier method, so that we instead maximize
\begin{eqnarray}
\tilde\cS(\rho)&\equiv&\cS(\rho)-\sum_i\lambda_i\bigl(\tr{\rho O_i}-o_i\bigr)\nonumber\\
\label{entropy_with_constraint1}&&\hspace*{0.6cm} -\nu\bigl(\tr{\rho}-1\bigr)-\mu\bigl(\tr{G(\rho)}-g\bigr) 
\end{eqnarray}
over all system-bath $\rho$.
Here, $\lambda_i$s are the Lagrange multipliers for the system information, the $\nu$ term enforces the unit-trace constraint for $\rho$, and $\mu(\tr{G(\rho)}-g)$ is symbolic for the Lagrange multiplier term that accounts for any additional knowledge one may have. The ME state is obtained from the stationarity condition, i.e., setting 
\begin{eqnarray}
\delta \tilde S(\rho)&\equiv&\tilde S(\rho+\delta\rho)-\tilde S(\rho)\\
&=&-\tr{\delta\rho{\left(\text{log}\,\rho+1+\nu+\sum_i\!\lambda_iO_i+\mu\frac{\partial G}{\partial \rho}\right)}}\nonumber
\end{eqnarray}
to zero for arbitrary infinitesimal $\delta\rho$. This requires
\begin{equation}
\rho=\exp{\left(-1-\nu-\sum_i\lambda_iO_i-\mu\frac{\partial G}{\partial \rho}\right)}.
\end{equation}
The trace-1 constraint associated with $\nu$ can be taken care of immediately, so that we have the ME state as
\begin{equation}\label{eq:rhoME}
\rhoME=\exp{\left(-\Lambda-\mu\frac{\partial G}{\partial \rho}\right)}/\tr{\ldots},
\end{equation}
where $\Lambda\equiv \sum_i\lambda_iO_i$. Here, $\tr{\ldots}$---meaning the trace of the numerator when appearing in the denominator of a fraction---takes care of the unit-trace constraint associated with $\nu$. Note that there are proposals to use Bayesian corrections to the ME scheme in the determination of the Lagrange multipliers, when the sample sizes are small \cite{Rau:2011}. Here, we assume that we have measured enough copies so that the $\lambda_i$s can be accurately determined.

For each experimental situation, one can view the ME strategy as an \emph{assignment map} $\AME$ that takes a system-only state $\rhoS$ to the corresponding system-bath state $\rhoME$. Assignment maps were first discussed in Refs.~\cite{Pechukas:94,Alicki95,Pechukas95} as a one-to-one association between system-only and system-bath states to permit discussion of reduced system-only dynamics. Concretely, with the ME assignment map, one can describe the reduced system dynamics as a dynamical map
\begin{eqnarray}\label{eq:Phi}
\rhoS(t)&=&\Phi_t(\rhoS),\\
\textrm{with} \quad\Phi_t(\cdot)&\equiv&\trB\{U(t,0)\AME(\cdot)U(t,0)^\dagger\}.\nonumber
\end{eqnarray}
Here, $\rhoS$ written without any time argument is the initial system state verified by tomography on the system, and $U(t,0)$ is the system-bath joint evolution operator for $H$. 

For any assignment map $\cA$, Refs.~\cite{Pechukas:94,Alicki95,Pechukas95} considered three properties: (a) linear [preserves mixtures, i.e., $\cA(p\rhoS+(1-p)\sigma_\mathrm{S})=p\cA(\rhoS)+(1-p)\cA(\sigma_\mathrm{S})$], (b) positivity preserving [$\cA(\rhoS)\geq 0$ for all $\rhoS\geq 0$], and (c) consistent [$\trB\{\cA(\rhoS)\}=\rhoS$]. Pechukas \cite{Pechukas:94} showed that the only assignment map that possesses all three properties is the factorized map, $\mapAref(\rhoS)=\rhoS\otimes \rhoBref$, for some fixed reference bath state $\rhoBref$. Our ME assignment map $\AME$, generally not of this factorized structure except in special circumstances (see below), satisfies (b) and (c), but not (a). That it does not preserve mixtures should not bother us: When preparing a mixed state $\rhoS$, one may follow a procedure that produces the state directly, or one may get $\rhoS$ as a probabilistic mixture of two separately prepared states $\rhoSa$ and $\rhoSb$. There is no reason why the $\rhoME$ states for the two different preparations---corresponding to two different physical scenarios---of $\rhoS$ should coincide. According to the ME philosophy, the actual preparation procedure used should form part of the prior information entering the constraints. The two different preparations constitute different prior information and should hence lead to different $\rhoME$ states.

As an example of the ME strategy, we consider the typical situation where the only controlled macroscopic parameter is the temperature $T$. Such a constraint appears in Eqs.~\eqref{entropy_with_constraint1} and \eqref{eq:rhoME} with $\mu= \beta\equiv 1/kT$ ($k$ is the Boltzmann constant), $G(\rho)=\rho H$, and $g$ is the average energy. The corresponding ME state is
\begin{equation}\label{eq:rhoME1}
\rhoME=\exp(-\Lambda-\beta H)/\tr{\ldots}.
\end{equation}
Let us examine three cases in this context.

\textit{Case 1.} Suppose the bath is in thermal equilibrium, and the system is exposed to the bath only at time $t=0$. This describes many experimental situations where the system is prepared in a different part of the experimental setup and then moved into the main experimental chamber at time $t=0$. The system-bath state---upon introducing the system into the final chamber---should hence be the product $\rhoS\otimes \rhoBth$, where $\rhoBth\equiv \exp(-\beta\HB)/\tr{\ldots}$ is the bath-only thermal state at temperature $T$. Indeed, maximizing entropy subjected to the tomography constraints on the system and the bath temperature---incorporated using $\mu g(\rho)$---yields the expected state. The product structure is clear as $\tilde\cS(\rho)$ involves constraints on system and bath separately, and $\rhoBth$ emerges as the usual thermal state from maximizing entropy for fixed temperature.
The appropriate ME assignment map for this physical situation is hence of the factorized $\mapAref$ variety,
\begin{equation}
\cA_0(\rhoS)=\rhoS\otimes\rhoBth,
\end{equation}
with $\rhoBref=\rhoBth$.

\textit{Case 2.} The situation gets more complicated when the system is prepared in constant contact with the bath, and the combined system-bath is at some temperature $T$. In this case, there is little support for an initially uncorrelated system-bath state $\rhoS\otimes \rhoref$, although this is commonly used in the literature even for such cases. The preparation of the system state can sometimes be accomplished very rapidly, but even so, the coupling between system and bath is always on and can modify the outcome of the preparation procedure. Many schemes employ slow or long processes that gradually move the system population into the desired state, and during this transfer, the interaction $\HSB$ can be thought of as continuously keeping the system in thermal equilibrium with the bath. In such cases, the coupling to the bath is accounted for through the use of $\rhoME$ as the initial system-bath state for the appropriate $T$ and $H$. This case is examined carefully in the next Section, assuming a weak system-bath coupling.

\textit{Case 3.} One might also have no macroscopic information altogether.
In this case, there is no $\mu G(\rho)$ term, and the ME state is $\rhoME=\upe^{-\Lambda}/\tr{\ldots}$.
Since $\Lambda$ is a system-only operator, this simplifies to $\rhoME=\rhoS\otimes \frac{1}{d_\text{B}}\boldsymbol{1}_\mathrm{B}$, where $d_\text{B}$ is the dimension of the bath. $\rhoS$ is the usual state one would have deduced from the tomography data, with $\lambda_i$s determined only by the constraint summarized as $\trB\{\rho\}=\rhoS$. Again, the ME assignment map here is of the factorized type $\mapAref$, with $\rhoBref=\frac{1}{d_\text{B}}\boldsymbol{1}_\mathrm{B}$. The maximally mixed state on the bath appears naturally from this approach, expressing our complete ignorance of the bath state.

\section{Weak system-bath coupling at fixed temperature}\label{sec:weakCoup}
Let us examine more closely the often-encountered situation of constant contact between the system and the bath, both held at some temperature $T$ (Case 2). We assume a weak coupling between the system and the bath. This always-on weak coupling scenario is the case of most relevance in the discussion of initial system-bath correlations. Below, we derive an explicit formula for $\rhoME$ in terms of $\HSB$, and explore its consequences.

Consider a system-bath composite held at fixed temperature $T[=1/(\beta k)]$, evolving with the full Hamiltonian $H=H_0+\HSB$ with $H_0\equiv \HS+\HB$. $\HSB$ is taken to be a small perturbation to $H_0$. The $\rhoME$ state is given by Eq.~\eqref{eq:rhoME1}
with $\Lambda$ determined from the constraint $\trB\{\rhoME\}=\rhoS$.
When $\HSB=0$, the ME state is simply the factorized state, 
\begin{equation}
\rhoME_0\equiv\exp(-\Lambda_0-\beta H_0)/\tr{\ldots}=\rhoS\otimes\rhoBth=\cA_0(\rhoS),
\end{equation}
with $\Lambda_0$ determined by $\trB\{\rhoME_0\}=\exp[-\Lambda_0-\beta \HS]/\tr{\ldots}=\rhoS$ so that 
\begin{equation}\label{eq:Lambda0}
\Lambda_0=-\log\rhoS-\beta\HS.
\end{equation}
Here, the natural logarithm $\log(\cdot)$ is understood to be taken on the support of $\rhoS$.
$\Lambda_0$ is a system-only operator, and is unique up to an additive constant of no consequence: Any constant added to the right-hand-side of Eq.~\eqref{eq:Lambda0} is removed by the $\tr{\ldots}$ normalization factor in $\rhoME_0$. A similar remark applies to $\Lambda$ and $\rhoME$.

Since $\HSB$ is a weak perturbation to $H_0$, one expects $\rhoME$ to be close to $\rhoME_0$; correspondingly, $\Lambda$ is expected to be close to $\Lambda_0$. We denote their difference by $\delta\Lambda\equiv \Lambda-\Lambda_0$, a quantity expected to be small. One can obtain an explicit formula for $\delta\Lambda$, accurate to linear order in $\beta\HSB$, which we phrase below as a proposition for clarity.

\begin{proposition*}
The ME state for weak system-bath coupling specified by the interaction Hamiltonian $\HSB$ at fixed temperature $T=1/(\beta k)$ is 
\begin{equation}\label{eq:rhoME2}
\rhoME=\exp(-\Lambda_0-\delta\Lambda-\beta H)/\textup{tr}\{\ldots\},
\end{equation}
with $\Lambda_0=-\log\rhoS-\beta\HS$ [Eq.~\eqref{eq:Lambda0}], and
\begin{equation}\label{eq:dL}
\delta\Lambda=- \textup{tr}_\textup{B}\{\rho_\textup{B}^{(\textup{th})}\beta\HSB\},
\end{equation}
to linear order in $\beta\HSB$.
\end{proposition*}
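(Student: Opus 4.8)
The plan is to treat $\delta\Lambda$ together with $\beta\HSB$ as a single first-order perturbation and to fix $\delta\Lambda$ by imposing the consistency constraint $\trB\{\rhoME\}=\rhoS$ order by order in $\beta\HSB$. The structural fact I would exploit at the outset is that $\Lambda=\sum_i\lambda_iO_i$ is a \emph{system-only} operator, hence so is $\delta\Lambda=\Lambda-\Lambda_0$; I therefore only need to produce a system operator satisfying the constraint. I would first rewrite the exponent in Eq.~\eqref{eq:rhoME2} as $-\Lambda-\beta H=K_0+\delta K$, with unperturbed part $K_0\equiv-\Lambda_0-\beta H_0=\log\rhoS-\beta\HB$ and perturbation $\delta K\equiv-\delta\Lambda-\beta\HSB$. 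In $K_0$ the system and bath pieces commute, so $\upe^{sK_0}=\rhoS^{\,s}\otimes\upe^{-s\beta\HB}$ and $\upe^{K_0}/\tr{\ldots}=\rhoS\otimes\rhoBth\equiv\rho_0$; and since $\delta\Lambda=\ordO(\beta\HSB)$, the whole of $\delta K$ is first order.

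Next I would expand $\upe^{K_0+\delta K}$ to first order with the Duhamel (Kubo) identity $\upe^{K_0+\delta K}\simeq\upe^{K_0}+\int_0^1\upe^{sK_0}\,\delta K\,\upe^{(1-s)K_0}\,\upd s$, expand the normalization $\tr{\ldots}$ to the same order, and so write $\rhoME\simeq\rho_0+\rho_1$ with a definite first-order term $\rho_1$. Because $\trB\{\rho_0\}=\rhoS$ already holds, the constraint $\trB\{\rhoME\}=\rhoS$ collapses to the single first-order operator equation $\trB\{\rho_1\}=0$, which is linear in the unknown $\delta\Lambda$.

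The evaluation then hinges on the partial traces. Writing $\HSB=\sum_\alpha S_\alpha\otimes B_\alpha$ and using $\upe^{sK_0}=\rhoS^{\,s}\otimes\upe^{-s\beta\HB}$, I would perform the bath trace first; cyclicity of the bath trace removes the $s$-dependence of the bath factors [e.g.\ $\trB\{\upe^{-s\beta\HB}B_\alpha\upe^{-(1-s)\beta\HB}\}=\trB\{\upe^{-\beta\HB}B_\alpha\}$], leaving only the system superoperator $\int_0^1\rhoS^{\,s}(\cdot)\,\rhoS^{\,1-s}\,\upd s$ acting on $\delta\Lambda$ and on the $S_\alpha$. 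The equation $\trB\{\rho_1\}=0$ then reads $\Phi[\delta\Lambda]=\Phi[\delta\Lambda_\star]$, where $\Phi[Y]\equiv\int_0^1\rhoS^{\,s}Y\rhoS^{\,1-s}\,\upd s-\tr{Y\rhoS}\,\rhoS$ and $\delta\Lambda_\star\equiv-\trB\{\rhoBth\,\beta\HSB\}$ is the claimed expression.

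The main obstacle is exactly that $\delta\Lambda$ need not commute with $\rhoS$, so $\int_0^1\rhoS^{\,s}(\cdot)\rhoS^{\,1-s}\,\upd s$ cannot be replaced by multiplication by $\rhoS$ and one cannot naively ``divide out'' $\rhoS$. The resolution is that the \emph{same} dressing $\int_0^1\rhoS^{\,s}(\cdot)\rhoS^{\,1-s}\,\upd s$ multiplies both the $\delta\Lambda$ contribution and the $S_\alpha$ contribution, so $\delta\Lambda_\star$ makes the two cancel and the equation reduces to $0=0$. To finish I would argue uniqueness: diagonalizing $\rhoS$ with eigenvalues $p_i$, the superoperator $\Phi$ multiplies each off-diagonal entry by the logarithmic mean $(p_i-p_j)/(\log p_i-\log p_j)>0$, so $\ker\Phi$ consists only of multiples of $\boldsymbol{1}_\mathrm{S}$. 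Hence the constraint fixes $\delta\Lambda=\delta\Lambda_\star$ up to an additive constant of no consequence, which is Eq.~\eqref{eq:dL}.
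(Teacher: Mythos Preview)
Your proposal is correct and is essentially the paper's own argument: both linearize the normalized exponential via the Duhamel/Kubo identity, impose $\trB\{\delta f\}=0$, and use cyclicity of the bath trace together with $[\rhoBth,\HB]=0$ to collapse the $\HSB$ contribution to $\trB\{\rhoBth\beta\HSB\}$ dressed by the same system superoperator as $\delta\Lambda$. The only substantive addition you make is the explicit uniqueness analysis of $\Phi$ (kernel $=\mathbb{C}\boldsymbol{1}_\mathrm{S}$ via the logarithmic-mean eigenvalues), which the paper leaves implicit in the phrase ``permits the identification of its solution'' and in its remark that $\Lambda$ is determined only up to an additive constant.
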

\begin{proof}
Define $f(O)\equiv\upe^{-O}/\tr{\upe^{-O}}$ for any operator $O$. Let $K\equiv\Lambda_0+\beta H_0$. Then, $\rhoME_0=f(K)$ and $\rhoME=f(\Lambda+\beta H)=f(K+\delta K)$, with $\delta K\equiv \delta\Lambda+\beta \HSB$, taken as small compared with $K$. Let $\rhoME-\rhoME_0=f(K+\delta K)-f(K)\equiv\delta f$, evaluated as a linear response to a variation in $K$ as
\begin{eqnarray}\label{eq:df}
\delta f&=&\frac{1}{\tr{\text{e}^{-K}}}{\left[\delta(\text{e}^{-K})-f(K)\tr{\delta(\text{e}^{-K})}\right]}\\
&=&-f(K)\!{\left[\int_0^1\!\!\!\!\upd\gamma\,\upe^{\gamma K}\delta K\upe^{-\gamma K}-\tr{\delta Kf(K)}\right]}\!.~\nonumber
\end{eqnarray}
The second line comes from noting that $\delta(\upe^{-K})=-\int_0^1 \text{d}\gamma ~\upe^{-\gamma K}\,(\delta K)\,\upe^{-(1-\gamma) K}$ from standard perturbation theory, and consequently, $\tr{\delta(\upe^{-K})}=-\tr{\upe^{-K}\,\delta K}$. The constraints $\trB\{\rhoME\}=\rhoS=\trB\{\rhoME_0\}$ that determine $\Lambda$ and $\Lambda_0$, respectively, imply $\trB\{\delta f\}=0$. Note that (i) $\rhoME_0=f(K)$ can be written as $\rhoS\otimes\rhoBth$; (ii) $\rhoBth$ and $K$ commute; and (iii) $\Lambda$ and $\Lambda_0$, and hence $\delta\Lambda$, are system-only operators. Using these facts, straightforward simplification of $\trB\{\delta f\}=0$ upon insertion of Eq.~\eqref{eq:df} permits the identification of its solution: One requires the relationship $\delta\Lambda=-\trB\{\rhoBth\beta\HSB\}$, which is Eq.~\eqref{eq:dL}. This holds to linear order in $\beta\HSB$, due to the linear-response consideration of $\delta f$, which ignores quadratic terms with $(\delta K)^2=(\delta\Lambda+\beta\HSB)^2\sim(\beta\HSB)^2$.
\end{proof}

Two remarks are in order. First, if the system is prepared in a pure state $|\psi_\text{S}\rangle$, then, the joint system-bath ME state is a tensor-product state by construction. This is enforced by the constraint $\trB\{\rhoME\}=\trB\{\rhoME_0\}=\ket{\psi_\text{S}}\bra{\psi_\text{S}}$, via $\trB\{\delta f\}=0$. This must also be true, accurate to $O(\beta H_\text{SB})$, from direct consideration of Eq.~\eqref{eq:rhoME2} and Eq.~\eqref{eq:dL}. The example in Section \ref{subsec:JC} verifies this explicitly for the Jaynes-Cummings model. Second, note that inserting Eq.~\eqref{eq:dL} for $\delta\Lambda$ into $\rhoME$ of Eq.~\eqref{eq:rhoME2} yields a positive, trace-unity operator for $\rhoME$. More precisely, one can define an assignment map,
\begin{equation}
\mapA(\rhoS)=\exp[\log\rhoS-\delta\Lambda-\beta(\HB+\HSB)]/\tr{\ldots},
\end{equation}
with $\delta\Lambda$ given by Eq.~\eqref{eq:dL}.
Then, the ME state for system state $\rhoS$ under the weak coupling, fixed temperature situation, is simply
\begin{equation}
\rhoME=\mapA(\rhoS)+\ordO\bigl[(\beta\HSB)^2\bigr].
\end{equation}
$\mapA$ hence takes an input system state and outputs a system-bath state approximating the ME state. One can see explicitly that it is a nonlinear map on $\rhoS$, except in the trivial case when $\HSB=0$. For $\HSB=0$, $\mapA(\rhoS)=\rhoS\otimes\rhoBth=\cA_0(\rhoS)$, and the linearity in the input argument is obvious. For the rest of this paper, we will use $\rhoME$ and $\mapA$ interchangeably, with the understanding that the equivalence is up to $O(\beta\HSB)$.

\subsection{\label{sec:WeakB}Deviation of $\rhoME$ from uncorrelated state}
That $\rhoME$ is close to $\rhoME_0$ when $\HSB$ is weak can be quantified using the Proposition when the Hamiltonians involved are bounded operators:
\begin{corollary}\label{cor:1}
For weak system-bath coupling specified by the interaction Hamiltonian $\HSB$ at fixed temperature $T=1/(\beta k)$,
\begin{equation}\label{eq:norm_ineq}
\Vert\mapA(\rhoS)-\rhoME_0\Vert\leq 4\beta\Vert\HSB\Vert,
\end{equation}
where $\Vert O\Vert\equiv\tr{\sqrt{O^\dagger O}}$ denotes the trace norm.
\end{corollary}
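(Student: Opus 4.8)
The plan is to view both operators as generalized Gibbs states and interpolate between them. Borrowing the notation from the Proposition's proof, I would write $\rhoME_0=f(K)$ and $\mapA(\rhoS)=f(K+\delta K)$, where $f(O)\equiv\upe^{-O}/\tr{\upe^{-O}}$, $K\equiv\Lambda_0+\beta H_0$ and $\delta K\equiv\delta\Lambda+\beta\HSB$, keeping $\delta K$ exact (not linearized) so that the resulting estimate is a genuine bound rather than a leading-order approximation. Since $\delta\Lambda$ and $\HSB$ are Hermitian, every member of the interpolating family $K_s\equiv K+s\,\delta K$, $s\in[0,1]$, is Hermitian, and I would write $\mapA(\rhoS)-\rhoME_0=\int_0^1\upd s\,\tfrac{\upd}{\upd s}f(K_s)$, so that $\Vert\mapA(\rhoS)-\rhoME_0\Vert\leq\int_0^1\upd s\,\Vert\tfrac{\upd}{\upd s}f(K_s)\Vert$, with $\Vert\cdot\Vert=\Vert\cdot\Vert_1$ the trace norm.

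Next I would differentiate $f(K_s)$ along the path in the same manner as Eq.~\eqref{eq:df}. The quotient rule together with $\tfrac{\upd}{\upd s}\upe^{-K_s}=-\int_0^1\upd\gamma\,\upe^{-\gamma K_s}(\delta K)\upe^{-(1-\gamma)K_s}$ produces two pieces: the integral term $-Z_s^{-1}\!\int_0^1\upd\gamma\,\upe^{-\gamma K_s}(\delta K)\upe^{-(1-\gamma)K_s}$, with $Z_s\equiv\tr{\upe^{-K_s}}$, and the scalar-weighted term $f(K_s)\tr{f(K_s)\delta K}$. The second piece is immediate: $f(K_s)$ is a density operator, so its trace norm is $1$, and $|\tr{f(K_s)\delta K}|\leq\Vert\delta K\Vert_\infty$ by H\"older, giving a contribution of at most $\Vert\delta K\Vert_\infty$ in operator norm $\Vert\cdot\Vert_\infty$.

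The main obstacle is the integral term, since the obvious factorization leaves an uncontrolled factor $\Vert\upe^{-\gamma K_s}\Vert_\infty$ (the smallest eigenvalue of $K_s$ is not bounded below). The trick I would use is the three-factor H\"older inequality for Schatten norms with exponents $(1/\gamma,\infty,1/(1-\gamma))$, whose reciprocals sum to $1$: $\Vert\upe^{-\gamma K_s}(\delta K)\upe^{-(1-\gamma)K_s}\Vert\leq\Vert\upe^{-\gamma K_s}\Vert_{1/\gamma}\Vert\delta K\Vert_\infty\Vert\upe^{-(1-\gamma)K_s}\Vert_{1/(1-\gamma)}=Z_s^{\gamma}\Vert\delta K\Vert_\infty Z_s^{1-\gamma}=Z_s\Vert\delta K\Vert_\infty$. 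The exponents are chosen precisely so that the two $Z_s$ powers recombine into a single $Z_s$ that cancels the prefactor $1/Z_s$, and the $\gamma$-integral then contributes only $\Vert\delta K\Vert_\infty$, independent of the spectrum of $K_s$. Combining the two pieces gives $\Vert\tfrac{\upd}{\upd s}f(K_s)\Vert\leq2\Vert\delta K\Vert_\infty$ for all $s$, hence $\Vert\mapA(\rhoS)-\rhoME_0\Vert\leq2\Vert\delta K\Vert_\infty$.

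It then remains to bound $\Vert\delta K\Vert_\infty\leq\Vert\delta\Lambda\Vert_\infty+\beta\Vert\HSB\Vert_\infty$. From Eq.~\eqref{eq:dL}, $\delta\Lambda=-\trB\{\rhoBth\beta\HSB\}$, and for any unit system vector $\ket{\phi}$ one has $\bra{\phi}\trB\{\rhoBth\HSB\}\ket{\phi}=\tr{(\ket{\phi}\bra{\phi}\otimes\rhoBth)\HSB}$, whose modulus is at most $\Vert\HSB\Vert_\infty$ because $\ket{\phi}\bra{\phi}\otimes\rhoBth$ is a joint density operator; taking the supremum over $\ket{\phi}$ yields $\Vert\delta\Lambda\Vert_\infty\leq\beta\Vert\HSB\Vert_\infty$, so $\Vert\delta K\Vert_\infty\leq2\beta\Vert\HSB\Vert_\infty$ and $\Vert\mapA(\rhoS)-\rhoME_0\Vert\leq4\beta\Vert\HSB\Vert_\infty$. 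Since the operator norm never exceeds the trace norm, $\Vert\HSB\Vert_\infty\leq\Vert\HSB\Vert$, this already implies the stated inequality~\eqref{eq:norm_ineq}, and is in fact slightly stronger. The factor $4$ is thereby transparent: a $2$ from the two terms of $\tfrac{\upd}{\upd s}f(K_s)$ and a $2$ from splitting $\delta K=\delta\Lambda+\beta\HSB$.
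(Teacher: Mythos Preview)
Your proof is correct and, in fact, slightly sharper than the paper's. The paper proceeds differently: it reuses the linearized expression $\delta f$ from Eq.~\eqref{eq:df}, splits it linearly as $\delta f=g(\delta\Lambda)+g(\beta\HSB)$ with $g(\cdot)\equiv-f(K)\bigl[\int_0^1\upd\gamma\,\upe^{\gamma K}(\cdot)\upe^{-\gamma K}-\tr{(\cdot)f(K)}\bigr]$, and then argues from subadditivity and submultiplicativity of the trace norm (together with $\Vert f(K)\Vert=1$) that $\Vert g(O)\Vert\leq2\Vert O\Vert$; the result follows from $\Vert\delta\Lambda\Vert=\Vert\trB\{\rhoBth\beta\HSB\}\Vert\leq\beta\Vert\HSB\Vert$.

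Your route differs in two substantive respects. First, you interpolate exactly between $f(K)$ and $f(K+\delta K)$ rather than stopping at the first-order variation, so your inequality is a genuine, non-perturbative bound on $\Vert\mapA(\rhoS)-\rhoME_0\Vert$ rather than a bound on its leading-order piece only. Second, your explicit three-factor H\"older estimate with exponents $(1/\gamma,\infty,1/(1-\gamma))$ is exactly what is needed to control the Duhamel integral: since conjugation by the non-unitary $\upe^{\gamma K}$ is not a trace-norm isometry, bare submultiplicativity applied separately to $f(K)$ and to $\int_0^1\upd\gamma\,\upe^{\gamma K}O\upe^{-\gamma K}$ does not obviously bound the latter by $\Vert O\Vert$, so your argument supplies the analytic detail the paper leaves implicit. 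As a bonus you arrive at $4\beta\Vert\HSB\Vert_\infty$ on the right-hand side, which is never larger than the trace-norm bound $4\beta\Vert\HSB\Vert$ stated in the corollary.
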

\begin{proof}
We return to Eq.~\eqref{eq:df} and rewrite $\delta f$ as a sum of two terms for the two additive parts of $\delta K$: $\delta f=g(\delta\Lambda)+g(\beta\HSB)$, where $g(\cdot)\equiv-f(K){\bigl[\int_0^1 \upd\gamma ~\upe^{\gamma K}(\cdot)\,\upe^{-\gamma K}-\tr{(\cdot) f(K)}\bigr]}$.
Since the trace norm is subadditive and submultiplicative, and $\Vert f(K)\Vert=1$, we have $\Vert g(O)\Vert\leq 2\Vert O\Vert$ for any operator $O$. This immediately gives $\Vert g(\beta\HSB)\Vert\leq 2\beta\Vert\HSB\Vert$. Furthermore, $\Vert g(\delta\Lambda)\Vert\leq 2\Vert\delta\Lambda\Vert=2\Vert \trB\{\rhoBth\beta\HSB\} \Vert \leq 2 \beta\Vert H_{\text{SB}}\Vert$. Thus, $\Vert \delta f\Vert =\Vert g(\delta\Lambda) +g(\beta\HSB)\Vert 
\leq \Vert g(\delta\Lambda)\Vert+ \Vert g(\beta\HSB) \Vert\leq 4\beta\Vert \HSB\Vert$, which is Eq.~\eqref{eq:norm_ineq}.
\end{proof}

This tells us that approximating the initial system-bath state as $\rhoS\otimes\rhoBth$ is sufficient as long as we are unable to detect effects of order $\beta\Vert\HSB\Vert$. It is also a statement that $\rhoME$ contains only weak correlations between system and bath, and hence any effects that rely on initial system-bath correlations are small. We see an example of this in the next subsection.

\subsection{Deviation from complete positivity}

An immediate consequence of initial correlations between system and bath is the possibility of non-CP effects in the subsequent system dynamics. A positive map $\Phi$ is one that takes positive semidefinite operators to positive semidefinite operators (maps states to states). A CP map further requires all extended maps $\Phi\otimes\ I_d$ to be positive, for any $d$ and for $I_d$ as the identity map acting on an ancillary system of dimension $d$. The CP property ensures that a composite system-ancilla state, generally with entanglement between system and ancilla, remains positive under $\Phi$ that acts on the system only.

The question of complete positivity comes up naturally when describing system-only dynamics from joint system-bath evolution. If the system is initially uncorrelated with the bath, the map $\Phi_t$ that takes $\rhoS=\trB\{\rhoSB(t=0)\}$ to $\rhoS(t)\equiv\trB\{\rhoSB(t)\}=\trB\{U(t,0)\rhoSB(0)U(t,0)^\dagger\}$ is CP. Here, $U(t,0)$ is the joint system-bath evolution operator.
When there is initial correlation between the system and the bath, as is the case when using the ME state as the starting state $\rhoSB(0)$, $\Phi_t$ may no longer be CP. More precisely, the map $\Phi_t$ will be CP if the initial system-bath state has zero discord \cite{Rodriguez-Rosario08}; otherwise, non-CP effects can emerge. Any observation of non-CP effects can hence be used as a witness for the presence of nonzero discord in the initial system-bath state \cite{Gessner11, Gessner14, Modi12, Ringbauer14}.
One might also imagine exploiting the effects that can arise when the CP restriction is lifted, as explored in many recent papers. Using our expression for $\rhoME$, one can see how strong these effects can be. 

A natural ``witness" for non-CP dynamics is the trace distance between two system states, a measure of distinguishability between the two states. Under a CP map, the distinguishability is non-increasing (see, for example, \cite{NCBook}); a violation signals non-CP effects. This fact is used to construct a witness for non-Markovian dynamics for evolution from time $t$ to an infinitesimally later time $t+\delta t$ \cite{Breuer09}; here, we compare only to initial time $t=0$, to detect the effects of a correlated initial state \cite{Laine2010}.
The following emerges immediately from the Proposition:
\begin{corollary}
Consider the situation of weak system-bath coupling as in the Proposition. Let $\Phi_t$ be a map on system-only states such that $\Phi_t(\cdot)=\trB\{U(t,0)\mapA(\cdot)U(t,0)^\dagger\}$ for some joint evolution $U$. Let $\rhoS(t)\equiv\Phi_t(\rhoS)$. For any two system states $\rhoS$ and $\rhoS'$, let $\Delta(\rhoS,\rhoS';t)\equiv \Vert\rhoS(t)-\rhoS'(t)\Vert-\Vert\rhoS-\rhoS'\Vert$. Then, 
\begin{eqnarray}\label{eq:trdist}
\max_{\rhoS,\rhoS'}\Delta(\rhoS,\rhoS';t)\leq 8\beta\Vert\HSB\Vert
\end{eqnarray}
for any time $t>0$.
\end{corollary}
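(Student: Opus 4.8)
The plan is to reduce the statement to Corollary~\ref{cor:1} by comparing the true reduced dynamics $\Phi_t$ against the dynamics one would obtain from the \emph{factorized} assignment map $\cA_0$. First I would introduce the reference map $\Phi_t^{(0)}(\cdot)\equiv\trB\{U(t,0)\,\cA_0(\cdot)\,U(t,0)^\dagger\}$, built from $\cA_0(\rhoS)=\rhoS\otimes\rhoBth=\rhoME_0$. Because $\cA_0$ is linear and merely appends the fixed bath state $\rhoBth$, and because partial trace after unitary conjugation is completely positive and trace preserving, $\Phi_t^{(0)}$ is a genuine CPTP map on system states. The standard contractivity of the trace distance under CPTP maps then yields $\Vert\Phi_t^{(0)}(\rhoS)-\Phi_t^{(0)}(\rhoS')\Vert\leq\Vert\rhoS-\rhoS'\Vert$ for all $\rhoS,\rhoS'$; that is, the factorized reference dynamics can never increase distinguishability, so the entire quantity to be bounded is the \emph{excess} coming from the initial correlations encoded in $\mapA$.

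Next I would split each evolved state into its reference part plus a correction. Writing $\mapA(\rhoS)=\rhoME_0+[\mapA(\rhoS)-\rhoME_0]$ and passing both pieces through $\trB\{U(t,0)(\cdot)U(t,0)^\dagger\}$ gives $\rhoS(t)=\Phi_t^{(0)}(\rhoS)+E(\rhoS)$ with the error term $E(\rhoS)\equiv\trB\{U(t,0)[\mapA(\rhoS)-\rhoME_0]U(t,0)^\dagger\}$. Since unitary conjugation leaves the trace norm invariant and the partial trace is trace-norm contractive, I get $\Vert E(\rhoS)\Vert\leq\Vert\mapA(\rhoS)-\rhoME_0\Vert\leq 4\beta\Vert\HSB\Vert$, where the last step is precisely Corollary~\ref{cor:1}. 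The nonlinearity of $\mapA$ causes no difficulty: the linear reference $\cA_0$ is controlled by CPTP contractivity, while all nonlinear content is absorbed into $E$ and bounded purely through its norm. Assembling the pieces with the triangle inequality, applied once to each of $E(\rhoS)$ and $E(\rhoS')$, yields
\begin{equation}
\Vert\rhoS(t)-\rhoS'(t)\Vert\leq\Vert\rhoS-\rhoS'\Vert+8\beta\Vert\HSB\Vert,
\end{equation}
so that $\Delta(\rhoS,\rhoS';t)\leq 8\beta\Vert\HSB\Vert$ holds uniformly in $\rhoS,\rhoS'$, and taking the maximum over pairs gives Eq.~\eqref{eq:trdist}.

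The conceptual crux—and the only genuinely nontrivial move—is recognizing that the problem collapses onto Corollary~\ref{cor:1} once the factorized reference $\Phi_t^{(0)}$ is inserted; everything else is bookkeeping. The remaining care is in justifying the two contractivity facts (CPTP non-expansiveness of the trace distance, and trace-norm contractivity of the partial trace together with unitary invariance) and in applying the correction bound \emph{separately} to $\rhoS$ and $\rhoS'$, which is what turns the single-state $4\beta\Vert\HSB\Vert$ of Corollary~\ref{cor:1} into the $8\beta\Vert\HSB\Vert$ appearing here. No small-parameter analysis beyond what Corollary~\ref{cor:1} already carries is needed, and the bound correctly vanishes when $\HSB=0$, consistent with $\mapA(\rhoS)=\cA_0(\rhoS)$ and exact CP dynamics in that limit.
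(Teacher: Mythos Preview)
Your proof is correct and follows essentially the same route as the paper: insert the factorized states $\rhoS\otimes\rhoBth$ and $\rhoS'\otimes\rhoBth$, apply Corollary~\ref{cor:1} once to each system state, and use the standard trace-norm properties (unitary invariance and contractivity under partial trace) together with the triangle inequality. The only cosmetic difference is that the paper applies the contractivity step once to the full difference $\mapA(\rhoS)-\mapA(\rhoS')$ before splitting, whereas you split first and package the product-state piece as a CPTP reference map $\Phi_t^{(0)}$; the content is identical.
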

\begin{proof}
Consider any two system-only states $\rhoS$ and $\rhoS'$. 
\begin{eqnarray}
&&\Vert \rhoS(t)-\rhoS'(t)\Vert=\Vert\Phi_t(\rhoS)-\Phi_t(\rhoS')\Vert\nonumber\\
&=&\Vert\trB\{U(t,0)[\mapA(\rhoS)-\mapA(\rhoS')]U(t,0)^\dagger\}\Vert\nonumber\\
&\leq &\Vert\mapA(\rhoS)-\mapA(\rhoS')\Vert
\nonumber\\
&=&\Vert\mapA(\rhoS)-\rhoS\otimes\rhoBth+\rhoS\otimes\rhoBth\nonumber\\
&&\qquad-\rhoS'\otimes\rhoBth+\rhoS'\otimes\rhoBth-\mapA(\rhoS')\Vert\nonumber\\
&\leq&\Vert\mapA(\rhoS)-\rhoME_0\Vert+\Vert\rhoS-\rhoS'\Vert\nonumber\\
&&\qquad+\Vert(\rho')^{(\mathrm{ME})}_0-\mapA(\rhoS')\Vert\nonumber\\
&\leq&8\beta\Vert\HSB\Vert+\Vert\rhoS-\rhoS'\Vert.
\end{eqnarray}
Here, $(\rho')^{(\mathrm{ME})}_0\equiv\cA_0(\rhoS')= \rhoS'\otimes \rhoBth,$ and we have used Eq.~\eqref{eq:norm_ineq} in the last line. We thus have that $\Delta(\rhoS,\rhoS';t)\leq 8\beta\Vert\HSB\Vert$. This bound is independent of the choice of $\rhoS$ and $\rhoS'$, so Eq.~\eqref{eq:trdist} holds.
\end{proof}

\subsection{Illustrating examples}
\subsubsection{The Jaynes-Cummings Hamiltonian}\label{subsec:JC}
As a simple example, we work out the ME state for the Jaynes-Cummings Hamiltonian, where a spin-1/2 particle (the system) is coupled to a single mode of light (the bath),
\begin{equation}
H=\underbrace{\frac{1}{2}\hbar \omega_a\sigma_z}_{H_\text{S}}+\underbrace{\hbar\omega a^\dagger a}_{H_\text{B}}+\underbrace{J(\sigma_-a^\dagger+\sigma_+ a)}_{H_{\text{SB}}}.
\end{equation}
We denote spin up(down) by $\uparrow$($\downarrow$), $\sigma_i,i=x,y,z$, are the Pauli operators, $\sigma_{\pm}=\frac{1}{2}(\sigma_x\pm \text{i}\sigma_y)$ are the spin raising ($\sigma_+=\uket\dbra$) and lowering ($\sigma_-=\dket\ubra$) operators, and $a$ and $a^\dagger$ are the photon ladder operators. $\HSB$ describes the absorption and emission of a photon by the spin, with the corresponding change of spin state. By an appropriate choice of phases in the $a$ and $a^\dagger$ operators, or in the spin states, $J$ can be taken as real without any loss of generality. Furthermore, for the split into system and bath to be sensible, $\HSB$ is a weak correction to the free-evolution Hamiltonian, so that $J$ is small compared with the energy scales of $\HS$ and $\HB$.

The bath-only thermal state is $\rhoBth=\upe^{-\hbar\omega a^\dagger a}/\textrm{tr}\{\ldots\}$. Since both $\textrm{tr}\{\upe^{-\hbar\omega a^\dagger a}a\}$ and $\textrm{tr}\{\upe^{-\hbar\omega a^\dagger a}a^\dagger\}$ are zero, $\delta \Lambda$ of Eq.~\eqref{eq:dL} vanishes, so that $\Lambda=\Lambda_0=-\log\rhoS-\frac{1}{2}\beta\hbar\omega_a\sigma_z$. The ME state is thus $\rhoME=\exp{\left[\log\rhoS-\beta (\HB+\HSB)\right]}/\tr{\ldots}$. A quick check of the formula yields $\rhoME=\rhoS\otimes\rhoBth=\rhoME_0$ when $\HSB=0$, as it should be.

When $\HSB$ is nonzero, the correction to $\rhoME_0$ given by our ME approach can be worked out explicitly. In this case, the ME state can be written as $\rhoME=\mathrm{exp}(-K-\delta K)/\tr{\cdots}$, with $K=-\log\rhoS+\beta\hbar\omega a^\dagger a$ and $\delta K=\beta J(\sigma_-a^\dagger+\sigma_+ a)$. Note that $\rhoME_0=\exp(-K)/\tr{\ldots}$ in this notation. Using the Bloch-sphere representation for the state of the two-dimensional system, we can write $\rhoS=\frac{1}{2}(\boldsymbol{1}+\vec{s}\cdot\vec{\sigma})=\upe^{-\vec{\lambda}\cdot\vec{\sigma}}/\tr{\ldots}$ with $\vec{\lambda}=\lambda\vec{e}_s$,  where $\vec{e}_s\equiv\vec s/|\vec s|$, and $\lambda\equiv-\tanh^{-1}(s)$. Then, $K=\vec\lambda\cdot\vec \sigma+\beta\hbar\omega a^\dagger a$, where we have dropped an additive constant with no effect on $\rhoME$.
Let $B$ be an operator such that $\rhoME=\upe^{-B}\rhoME_0\upe^{B} = \upe^{-B}\upe^{-K}\upe^{B}/\tr{\cdots}$; the existence of such a $B$ is justified below. The operator identity $\mathrm{exp}(\upe^{-B}K\upe^{B})=\upe^{-B}\upe^{K}\upe^{B}$ tells us to set $\upe^{-B}K\upe^{B}=K+\delta K=K+\beta J(\sigma_-a^\dagger+\sigma_+ a)$. Since $\beta J$ is a small parameter, this last expression suggests that $\upe^B$ is close to the identity, or equivalently, that $B$ is small. Approximating $\upe^B\simeq 1+B$, we require $[K,B]\simeq\beta J(\sigma_-a^\dagger + \sigma_+a)$. The ansatz $B=\vec\alpha_+\cdot\vec\sigma a^\dagger+\vec\alpha_-\cdot\vec\sigma a+u_-\sigma_-a^\dagger + u_+\sigma_+a$ satisfies this requirement when
\begin{align}
u_\pm&=\pm J/\hbar\omega\nonumber\\
\textrm{and}\quad \vec\alpha_\pm&=\pm\upi2(J/\hbar\omega){M_\pm}^{-1}(\vec\lambda\times\vec\mu_\mp), \label{eq:alpha}
\end{align}
where $\vec\mu_\pm=\tfrac{1}{2}(1,\pm\upi,0)^\mathrm{T}$ so that $\sigma_\pm=\vec\mu_\pm\cdot\vec\sigma$, and $M_\pm$ are the matrices
\begin{equation}
M_\pm={\left(\begin{array}{ccc}
\pm\beta\hbar\omega&-\upi 2\lambda_z&\upi 2\lambda_y\\
\upi 2\lambda_z&\pm\beta\hbar\omega&-\upi 2\lambda_x\\
-\upi 2\lambda_y&\upi 2\lambda_x&\pm\beta\hbar\omega
\end{array}\right)}.
\end{equation}
$M_\pm$ are invertible except in accidental situations [$\det(M)\sim(\beta\hbar\omega)^3+(\lambda \textrm{ terms})$, which is never zero unless $\lambda\sim\beta\hbar\omega$].

Let us examine two special cases. First, for $\rhoS=\tfrac{1}{2}\boldsymbol{1}$, $\vec s$---and hence $\vec \lambda$---is zero. Then, $B=(J/\hbar\omega)(\sigma_-a^\dagger-\sigma_+a)$ and $\rhoME=\rhoME_0-[B,\rhoME_0]=\rhoME_0{\left\{1-(J/\hbar\omega){\left[\sigma_-a^\dagger (\upe^{\beta\hbar\omega}-1)-\sigma_+a(\upe^{-\beta\hbar\omega}-1)\right]}\right\}}$,   to linear order in $J/\hbar\omega$. As expected, the ME state differs from $\rhoME_0$ by a term linear in $\beta J$. If the experiment is sensitive to deviations of this size, the use of an initially uncorrelated state becomes inappropriate. 
Next, consider a pure state $\rhoS=\ket{\psi_\text{S}}\bra{\psi_\text{S}}$, with Bloch vector $\vec s$ of unit length. As mentioned in Sec.~\ref{sec:weakCoup}, $\rhoME$ in Eq. \eqref{eq:rhoME2} should be a product state with $\trB\{\rhoME\}=\ket{\psi_\text{S}}\bra{\psi_\text{S}}$. Solving for  Eq.~\eqref{eq:alpha}, taking $|\vec s|\rightarrow1$, or equivalently, $\lambda\rightarrow \infty$, this gives $\vec{\alpha}_\pm=-\tfrac{J}{2\hbar\omega}(\pm1,\mathrm{i},0)^\mathrm{T}$. Consequently, $B=0$, and so $\rhoME=\rhoME_0$, indeed a product state.

\subsubsection{The central spin model}
As a second example, we consider what is known as the ``central spin model": a central spin (the system) coupled to a number of other spins (the bath). Here, a ``spin" can be broadly understood as a two-level system or ``qubits", e.g., spin-1/2 particles in a bias magnetic field, atoms with two energetically relevant levels, etc. The central spin model applies in a large variety of physical situations;  see, for example, the review article Ref.~\cite{Prokofev00}. In particular, for the development of quantum devices, in which the study of the effects of the bath is necessary for good control, the central-spin model (and its extensions to structured spin baths) has been found to be the appropriate description of noise for many solid-state architectures; see, for example, Ref.~\cite{Muller15} for superconducting qubits, Ref.~\cite{Hall14} for NV centers, and Ref.~\cite{Brownnutt15} for anomalous heating issues in surface traps for ions.
These solid-state devices are often temperature controlled---e.g., the emission/absorption spectral line is temperature sensitive and has to be stabilized---and, through material characterization, one often has information on the most relevant two-level degrees of freedom forming the spin bath. These are the typically available macroscopic information.

For simplicity of computation, let us assume all spins (one system spin plus $N$ bath spins) have the same properties, namely, the same energy splitting (of size $\hbar\omega$), and pairwise interaction (along the quantization axis) of the same strength between all spins. The Hamiltonian is $H=H_\text{S}+H_\text{B}+H_\text{SB}$, with $H_\text{S}=\sigma_z^{\text{S}}$, $H_\text{B}=\sum_{i=1}^N \sigma_z^{(i)}+g\sum_{i<j}\sigma_z^{(i)} \sigma_z^{(j)}$, and $H_\text{SB}=g\sum_{i=1}^N  \sigma_z^{\text{S}} \sigma_z^{(i)}$, all measured in units of $\tfrac{1}{2}\hbar\omega$, and $\sigma_z^{(i)}$ is the Pauli-$Z$ operator acting on the $i$th bath spin. For weak interaction, $g\ll1$. One can again apply the same recipe in the Proposition to work out the ME state for the fixed temperature situation. Here we focus instead on illustrating the two corollaries.

In Fig.~1, we plot the quantities on the left- and right-hand-sides of Eq.~\eqref{eq:norm_ineq} in Corollary \ref{cor:1}, for fixed $g$ and varying the number of bath spins $N$. One sees that as $N$ increases, the difference between ME state and $\rho_0^{\text{(ME)}}$ increases. Meanwhile, the bound given by Eq.~\eqref{eq:norm_ineq} increases faster. This is not surprising, since our bound is a uniform bound that holds for all situations. Figure 2 shows the same quantities as in Fig.~1, but now with fixed $N=1$ and varying $g$. As the $g$ decreases, we see the two states approaching each other. Note that the blue circles and red asterisks differ by a $g$-independent constant order of magnitude, consistent with a scaling of both quantities by the same power of $\Vert\beta\HSB\Vert$.

\begin{figure}[t!]
\centering
\includegraphics[width=0.43\textwidth]{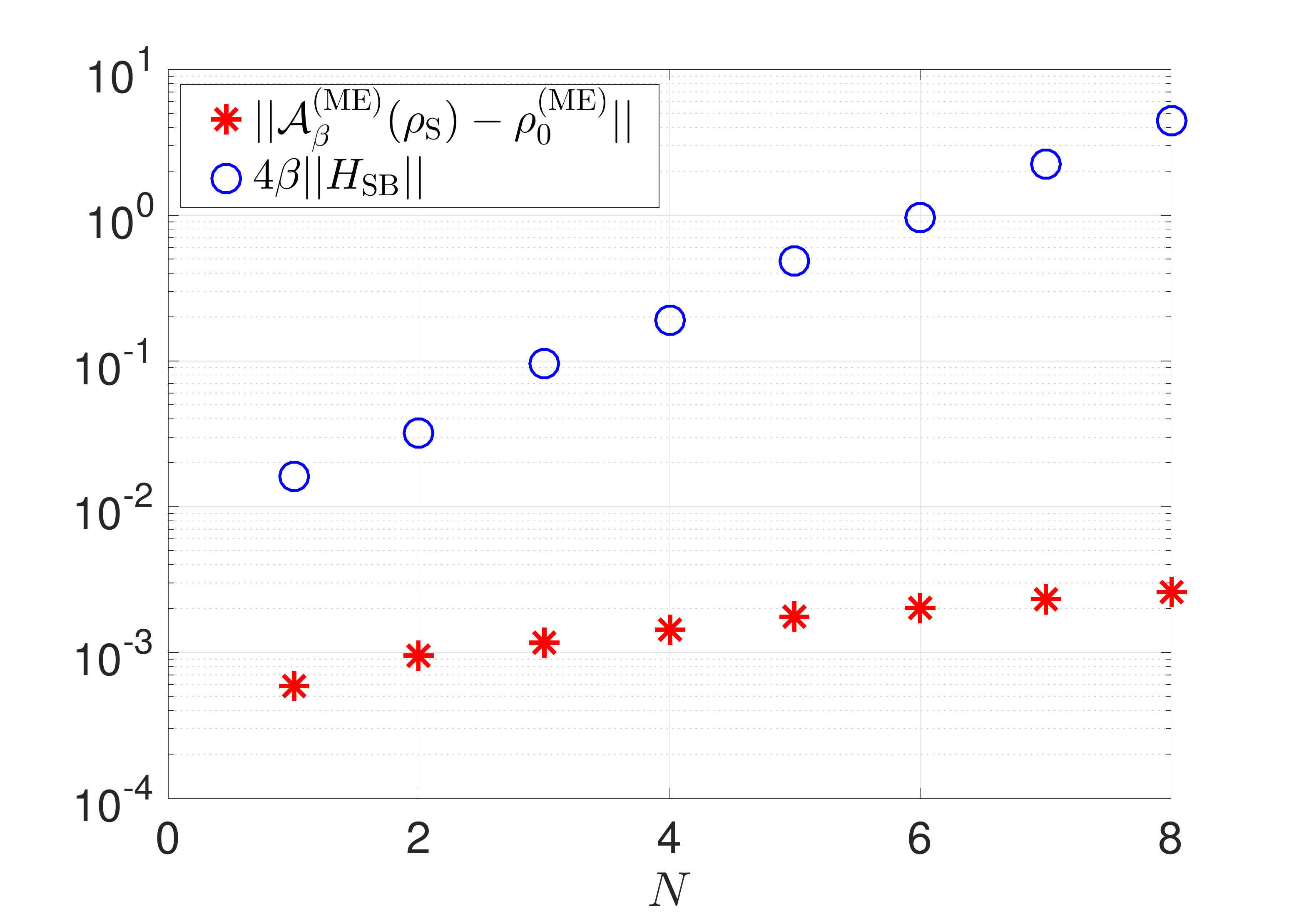}
\caption{\label{Fig15a}%
A central spin coupled to $N$ bath spins: The left-(red asterisks) and right--hand-sides (blue circles) of Eq.~(16) as a function of the number $N$ of bath spins in the central spin model. $g$ is fixed at $10^{-3}$. $\rho_\text{S}$ is a random system state; the features of the graph do not vary much with different choices of $\rho_S$.} 

\end{figure}

\begin{figure}[t!]
\centering
\includegraphics[width=0.43\textwidth]{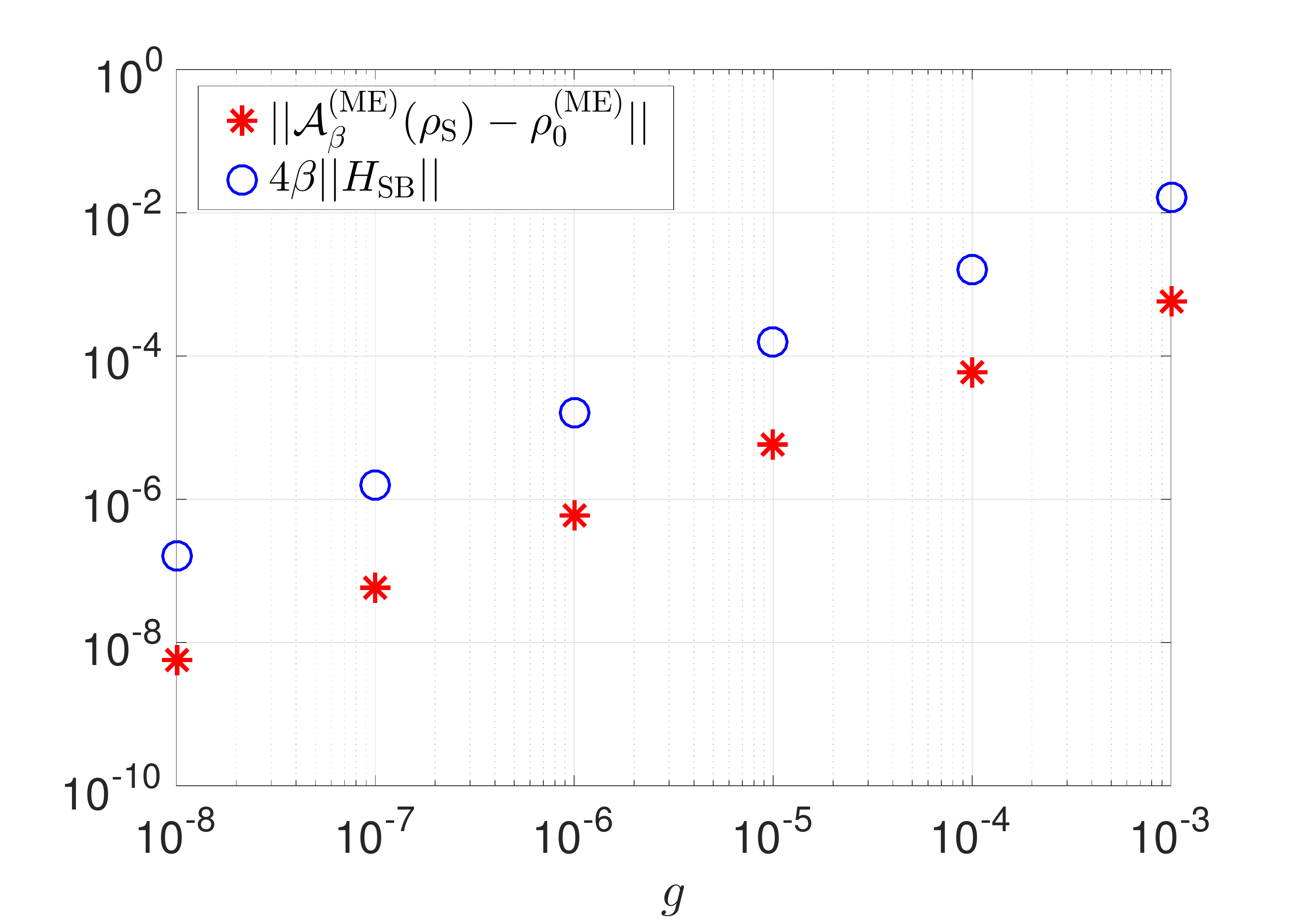}
\caption{\label{Fig15b}%
A central spin coupled to one bath spin: The left-(red asterisks) and right-(blue circles) hand side of Eq.~(16) as a function of the interaction strength $g$ in the central spin model. As in Fig.~\ref{Fig15a}, $\rho_\text{S}$ is a random system state; the features of the graph do not vary much with different choices of $\rho_S$.} 

\end{figure}

\begin{figure}[h!]
\centering
\includegraphics[trim=5mm 2mm 5mm 5mm, clip, width=\columnwidth]{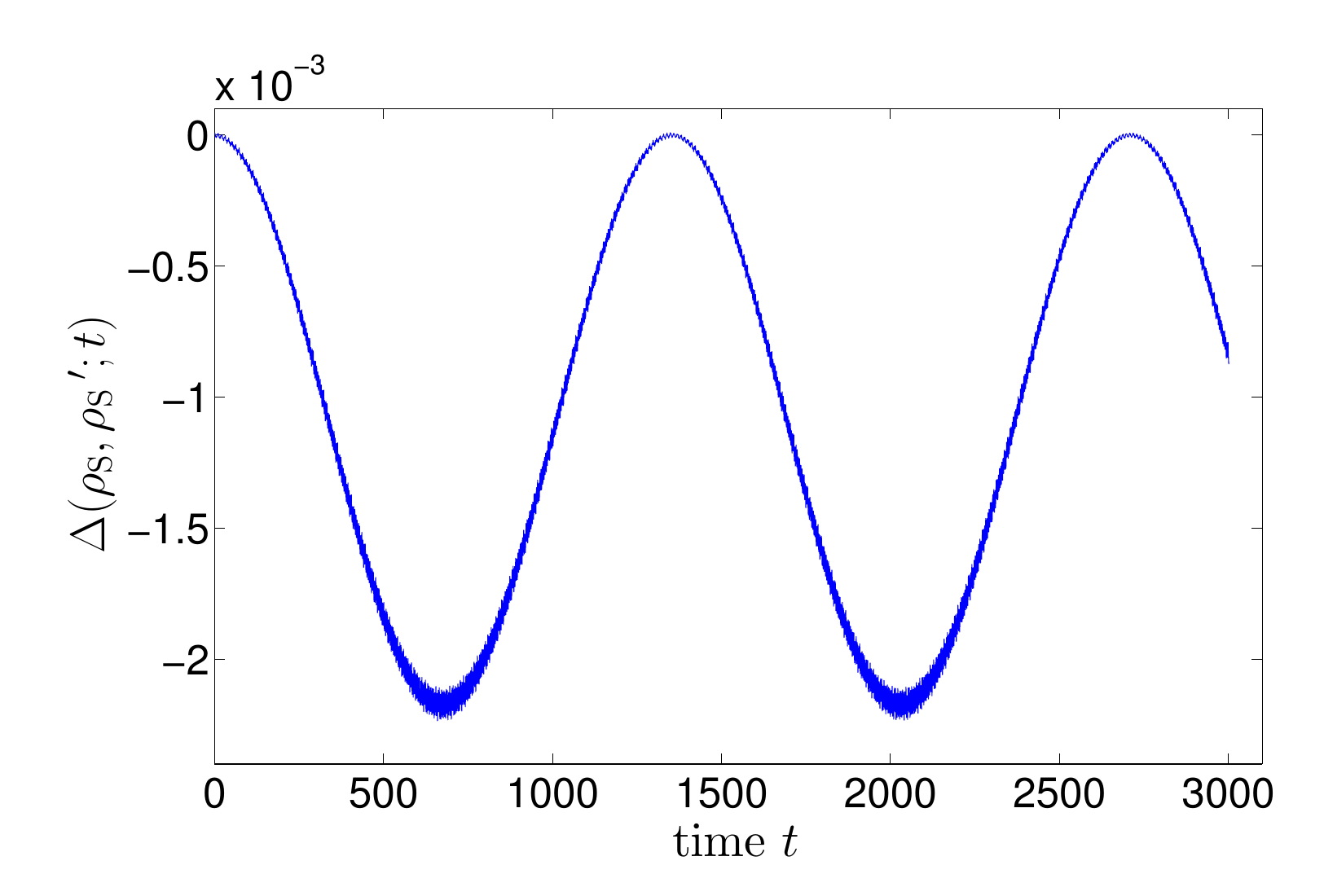}
\caption{\label{trace_distance}%
A central spin coupled to one bath spin: The full Hamiltonian $H$ is randomly chosen with the requirement that $\HSB$ is 1\% in norm compared to $\HS$ and $\HB$; the two initial system states are also randomly chosen. The plot shows a typical graph for $\Delta(\rhoS,\rhoS';t)$. One sees that $\Delta$ hardly rises above $0$ ($t$ in units of the maximum eigenenergy of the system divided by $\hbar$). While Eq.~\eqref{eq:trdist} provides an upper bound for $\Delta$, we note that this bound is not anywhere close to being achieved for any of the cases in our numerical simulation of 3000 randomly chosen Hamiltonians, each with 100 pairs of system initial states.
}
\end{figure}

As an illustration of Corollary 2, we study a similar example of a central spin coupled to a bath spin, but with a randomly chosen Hamiltonian $H$. Fig.~\ref{trace_distance} shows the typical behavior of $\Delta(\rhoS,\rhoS';t)$ for a randomly chosen pair of initial system states. Eq.~\eqref{eq:trdist} gives the maximal violation of the non-increasing property of distinguishability under a CP map for any two initial system states.
It is a precise statement of how initial system-bath correlations in the weak coupling situation have only weak consequences: One needs to detect effects of order $\beta\Vert\HSB\Vert$ to see a deviation from dynamics starting from a factorized initial state.

\section{Discussion}

The system tomography process reliably verifies that the system is prepared in a certain state $\rhoS$. For each $\rhoS$, there are infinitely many system-bath states $\rho$ consistent with $\trB\{\rho\}=\rhoS$. Macroscopic constraints on the system and the bath further restrict $\rho$, giving $\Sigma$ as the set of system-bath states consistent with the available information. The inability to measure microscopic features of the bath translates into our inability to single out---based only on experimental data---a particular $\rho$ from $\Sigma$. 
In the preceding sections, we employed the additional principle of maximum entropy to identify a unique $\rho$ from $\Sigma$, as a quantitative statement of our lack of knowledge. There, any knowledge of the system-bath state is accounted for via the use of Lagrange multipliers. 

We note that our knowledge and partial ignorance of the system-bath state can alternatively be incorporated using Bayesian methods, \emph{provided} one knows how to choose the underlying weight properly. Concretely, one reports the \emph{Bayesian mean} well known from Bayesian statistical inference: the average over all $\rho$ in $\Sigma$, $\overline{\rho}=\int_{\Sigma}(\text{d}\rho)\,\rho\,$. Here, $(\upd \rho)$ is a normalized weight density---the prior---over $\Sigma$, chosen to represent our available information. For example, consider Case 3 of Sec.~\ref{sec:3} where no information other than the system tomography data is available, so that $\Sigma$ comprises all states such that $\trB\{\rho\}=\rhoS$. A plausible expression of this complete ignorance of the bath state is the Haar average construction,
$\overline{\rho}=\int(\text{d}U_\text{B})(\boldsymbol{1}_\text{S}\otimes U_\text{B})|\psi\rangle_{\text{SB}}\langle\psi|(\boldsymbol{1}_\text{S}\otimes U_\text{B}^\dagger)$,
where $\int(\text{d}U_\text{B})$ averages over unitaries on the bath according to the Haar measure, with $|\psi\rangle_\mathrm{SB}$ a fiducial state such that $\trB\{|\psi\rangle_\mathrm{SB}\langle\psi|\}=\rhoS$. This gives $\overline{\rho}=\rhoS\otimes \boldsymbol{1}_\text{B}/d_\text{B}$, identical to that obtained earlier using the ME reasoning. However, we note that a different choice of ``uniform" weight over the bath states generally yields a $\overline{\rho}$ different from the corresponding $\rhoME$, and care should be taken in the choice of $(\upd\rho)$ for an apt expression of our ignorance of the bath state. We note also the work in Refs.~\cite{Paris07,Paris14} based on minimizing the relative entropy to a state $\tau$ which is treated as the prior; their procedure yields the ME state for the typical experimental situation of system-bath composite held at fixed temperature, \emph{provided} $\tau$ is the joint system-bath thermal state.

That the initial correlations between the system and bath are weak, as in Sec.~\ref{sec:WeakB}, is a consequence of the weak $\HSB$. It is tempting to extend this to the case of strong system-bath coupling. Certainly, there are four conditions under which the correlations between system and bath will clearly remain weak even for strong $\HSB$: (a) the state $\rhoS$ is nearly pure, so that any correlations with the bath state must necessarily be weak, regardless of $\HSB$; (b) the system is prepared separately from the bath, so that the product $\rhoS\otimes\rhoBth$ is the appropriate initial state (Case 1 of Sec.~\ref{sec:3}); (c) there is no knowledge other than the system tomographic constraint, such that the ME state is the product state $\rhoS\otimes \frac{1}{d_\text{B}}\boldsymbol{1}_\mathrm{B}$ (Case 3 of Sec.~\ref{sec:3}); (d) there is a temperature constraint on the system and bath, but the temperature is high enough such that many system-bath states are involved in the thermal mixture that follows from the ME state. 

When none of the four conditions above hold, and $\HSB$ is strong, the weak system-bath correlations conclusion is unlikely to be the right one. In this case, in the absence of the system tomographic constraint as enforced by $\Lambda$, the system-bath state will have a large weight on the ground state of the full Hamiltonian $H$, which includes the effects of the strong $\HSB$. With the state preparation constraint [but not satisfying Conditions (a) or (b) above], the system-bath state can still generally have strong correlations. One way of treating this would be that of Ref.~\cite{Chaudhry13} and others, of treating the state preparation as an operation done on the possibly strongly correlated joint thermal state. However, in such a situation, the split into system and bath is called into question. For quantum information processing tasks, in particular, one cannot hope for good control if one chooses to work with a system strongly coupled to an uncontrollable part.

So far, we have been considering a single joint system-bath state consistent with our incomplete knowledge. 
Such ``point estimators" are useful whenever one prefers a single ``best guess", for use in future calculations. One might instead be interested in properties of $\Sigma$ itself. For example, suppose we only have knowledge about the system state $\rho_\text{S}$, and one is concerned with the question: How far can any state in the set $\Sigma$ be from a tensor-product state? In this case, we can make use of the conditional entropy $\cS(\textrm{B}|\textrm{S})=\cS(\textrm{B},\textrm{S})-\cS(\textrm{S})$, where $\cS(\textrm{B},\textrm{S})=\cS(\rhoSB)$ is the joint system-bath entropy. The conditional entropy captures how much of the bath state we know, given $\rhoS$. A tensor-product state $\rhoS\otimes\rhoB$ gives $\cS(\textrm{B}|\textrm{S})=\cS(\rhoB)$, which is bounded by the maximal value of $\log d_\textrm{B}$ (attained by the state $\rhoS\otimes \boldsymbol{1}_\text{B}/d_\text{B}\in\Sigma$), and the minimal value of $0$ (attained by $\rhoS\otimes\rhoB\in\Sigma$ when $\rhoB$ is pure). Any negative value of $\cS(\textrm{B}|\textrm{S})$ hence indicates correlations (classical or quantum) between the system and bath. The lower bound is attained by the most correlated state in $\Sigma$, namely a purification of $\rhoS$, so that $\min_{\rhoSB\in\Sigma}\cS(\textrm{B}|\textrm{S})=-\cS(\rhoS)$.

In fact, the maximum entropy state that we look for in Sec.~\ref{sec:3} also maximizes the conditional entropy $\cS(\textrm{B}|\textrm{S})$ as the entropy of the system state is fixed by the tomography. It thus comes as little surprise that the maximum entropy state is uncorrelated or almost uncorrelated between the system and the bath. On the other hand, if one has at hand a situation requiring a conservative estimate of the maximal amount of correlation consistent with the given data---e.g., in the case of security analysis of quantum key distribution protocols---then a purification $|\psi\rangle_\text{SB}$ of $\rho_\text{S}$ in $\Sigma$ can be used. However, one should remember that such a pure state can only serve to derive bounds for the ``worst-case scenario'' and should not be misconstrued as a description of the actual system-bath state. The lack of control on the bath should translate into the lack of our ability to single out a particular $|\psi\rangle_\text{SB}$.

\section{Conclusion}
The initial joint system-bath state is not a mysterious quantity that is unknown or arbitrary. Instead, one should take proper account of our knowledge of the system-bath composite, encompassing all measurements already taken to characterize the experimental circumstances. This knowledge gives us a good handle on the system-bath state through the use of ME ideas from quantum tomography. 
In this article, we derived an expression for the ME state in the case of weak system-bath coupling, and showed that it is close to a factorized state.
The rationale for the ME strategy lies with the by-definition uncontrollable nature of the bath: Maximizing the entropy is a quantitative statement of that uncontrollability. Any observed deviation from the predictions of the ME joint state should thus be viewed as indicative of the presence of some microscopically controlled, repeatedly identically prepared degrees of freedom in the bath that one failed to identify. Discovery of these hidden controlled degrees of freedom would allow for the possibility of exploiting them for better and more creative system control. Otherwise, the ME state provides a good description of the initial system-bath situation, with weak correlations in typical situations of experimental interest.

\acknowledgments
We are grateful for insightful discussions with Berge Englert and for his comments on the manuscript. We also thank an anonymous referee for the AQIS2015 conference for helpful suggestions to an early draft of this manuscript.
This work is funded by the Singapore Ministry of Education and the National Research Foundation of Singapore, and by Yale-NUS College (through grant number IG14-LR001, and the Yale-NUS start-up grant).


\end{document}